\newenvironment{proof}{\emph{Proof: }}{\hfill$\square$}
\newcommand{\CUT}[1]{{}}
\renewcommand{\paragraph}[1]{\smallskip \noindent {\textbf{#1}}}
\newtheorem{theorem}{Theorem}
\newtheorem{assumption}{Assumption}
\newtheorem{lemma}{Lemma}
\begin{document}




 \TITLE{Stochastic Submodular Probing with State-Dependent Costs}

\ARTICLEAUTHORS{%
\AUTHOR{Shaojie Tang}
\AFF{University of Texas at Dallas}
} 

\ABSTRACT{%
In this paper, we study a new stochastic submodular maximization problem with state-dependent costs and rejections. The input of our problem is a budget constraint $B$, and a set of items whose states (i.e., the marginal contribution and the cost of an item) are drawn from a known probability distribution. The only way to know the realized state of an item is to probe that item. We allow rejections, i.e., after probing an item and knowing its actual state, we must decide immediately and irrevocably whether to add that item to our solution or not. Our objective is to sequentially probe/selet a best group of items  subject to a budget constraint on the total cost of the selected items. We present a constant approximate solution to this problem. We show that our solution can be extended to an online setting.}


\maketitle

%

\section{Introduction}
\label{sec:intro}
In this paper, we study a new stochastic submodular maximization problem. We introduce the state-dependent item costs and rejections into the classic stochastic submodular maximization problem. The input of our problem is a budget constraint $B$, and a set of items whose states  are drawn from a known probability distribution. The marginal contribution and the cost of an item is dependent on its actual state. We must probe an item in order to reveal its actual state. After probing an item and knowing its actual state, one must decide immediately and irrevocably whether to add that item to our solution or not. Our objective is to sequentially probe/select a best group of items subject to a budget constraint on the total cost of the selected items. We present a constant approximate solution to this problem. Perhaps surprisingly, our algorithm also applies to an online setting described as follows: suppose there is a sequence of items arriving in  an adversarial order, on the arrival of an item, we must decide immediately and irrevocably whether to select it or not after seeing its realization. For this online decision problem, our algorithm achieves  the same approximation ratio as obtained under the offline setting.

\emph{Related works.} Stochastic submodular maximization has been extensively studied recently \cite{golovin2011adaptive,chen2013near,fujii2016budgeted}. However, most of existing works assume that the cost of an item is deterministic and pre-known. We relax this assumption by introducing the state-dependent item cost. In particular, we assume that the actual cost of an item is decided by its realized state. We must probe an item in order to know its state. When considering linear objective function, our problem reduces to the stochastic knapsack problem with rejections \cite{gupta2011approximation}.  \cite{gupta2011approximation} gave a constant approximate algorithm for this problem. Recently, \cite{fukunaga2019stochastic} studied the stochastic submodular maximization problem with performance-dependent costs, however, their model does not allow rejections. Therefore, our problem does not coincide with their work. Moreover, it is not immediately clear how to extend their algorithm to online setting.
Our work is also closely related to submodular probing problem \cite{adamczyk2016submodular}, however, they assume each item  has only two states, i.e., active or inactive, we relax this assumption by allowing each item to have multiple states and the item cost is dependent on its state. Furthermore, their model does not allow rejections, i.e., one can not reject  an active item after it has been probed.


\section{Preliminaries and Problem Formulation}
\label{sec:behavior}
 \paragraph{Lattice-submodular functions} Let $[I]=\{1,2,\cdots, I\}$ be a set of items and $[S]=\{1,2,\cdots, S\}$ be a set of states.  Given two vectors $u,v \in [S]^{[I]}$,  $u\leq v$ means that $u(i)\leq v(i)$ for all $i\in [I]$. Define $(u\vee v)(i)=\max\{u(i),v(i)\}$ and $(u\wedge v)(i)=\min\{u(i),v(i)\}$. For each $i\in [I]$, define $\mathbf{1}_i$ as the vector that has a $1$ in the $i$-th coordinate and $0$ in all other coordinates.  A function $f: [S]^{[I]}\rightarrow \mathbb{R}_{+}$ is called \emph{monotone} if $f(u)\leq f(v)$ holds for any $u, v\in [S]^{[I]}$ such that $u\leq v$, and $f$ is called \emph{lattice submodular} if $f(u\vee s\mathbf{1}_i)-f(u) \geq f(v\vee s\mathbf{1}_i)-f(v)$ holding for any $u,v\in [S]^{[I]}$, $s\in [S]$, $i\in [I]$.

 \paragraph{Items and States} We let vector $\Phi\in [S]^{[I]}$ denote random states of all items. For each item $i\in [I]$, let $\Phi(i)\in [S]$ denote the random state of  item $i$.  Let $\phi(i)$ denote a \emph{realization} of $\Phi(i)$. The state of each item is unknown initially, one must probe an item before observing its realization. We allow rejections, i.e., after probing an item and knowing its  state, we must decide immediately and irrevocably whether to pick that item  or not. We assume there is a known prior probability distribution $\mathcal{D}_i$ over realizations for each item $i\in[I]$, i.e., $\mathcal{D}_i=\{\Pr[\Phi = \phi]: \phi \in [S]^{[I]}\}$.  The states of all items are decided independently at random, i.e., $\phi$ is drawn randomly from the product distribution $\mathcal{D}=\prod_{i\in [I]}\mathcal{D}_i$. For each $(i,s) \in[I]\times [S]$, we use $c_i(s)$ to denote the cost of an item $i$ when its state is $s$.

\begin{assumption}
\label{asum:1}We  assume that $c_i(s)\geq c_i(s')$ for any $i\in I$ and $s, s'\in [S]$ such that $s \geq s'$, i.e., the cost of an item is larger if it is in a ``better'' state.
\end{assumption}

The above assumption can also be found in \cite{fukunaga2019stochastic}. For each set of item-state pairs $U\subseteq [I]\times [S]$, we define a vector $u\in [S]^{[I]}$ such that $u(i)=0$ if $(i,s)\notin U$, otherwise $u(i)=\max\{s\mid (i,s)\in U\}$. Now we are ready to introduce a set function $h$ over a new ground set $[I]\times [S]$: consider an arbitrary set of item-state pairs $U\subseteq [I]\times [S]$, define $h(U)=f(u)$. It is easy to verify that if $f$ is monotone and lattice-submodular, then  $h$ is monotone and submodular. Given an $I \times S$ matrix $\mathbf{x}$, we define the multilinear extension $H$ of $h$ as:
\[H(\mathbf{x})=\sum_{U\subseteq  [I]\times [S]} h(U) \prod_{(i,s)\in U} x_{is}\prod_{(i,s)\notin U} (1-x_{is})\]
The value $H(\mathbf{x})$ is  the expected value of $h(R)$ where $R$ is a random set obtained by picking each element $(i,s)\in  [I]\times [S]$ independently with probability $x_{is}$.

\paragraph{Adaptive Policy and Problem Formulation}  We model the adaptive strategy of probing/picking items through a policy $\pi$. Formally, a policy $\pi$ is a function that specifies which item to probe/pick next based on the observations made so far. Consider an arbitrary policy $\pi$, assume that conditioned on  $\Phi=\phi$,  $\pi$ picks a set of items (and corresponding states) $G(\pi, \phi)\subseteq [I]\times[S]$\footnote{For simplicity, we only consider deterministic policy. However, all results can be easily extended to random policies.}.
The expected utility of $\pi$ is $f(\pi)=\sum_{\phi}\Pr[\Phi=\phi] h(G(\pi, \phi))$. We say a policy $\pi$ is \emph{feasible} if for any $\phi$ such that $\Pr[\Phi = \phi]>0$, $\sum_{(i,s)\in G(\pi, \phi)}c_i(s)\leq B$ where $B$ a budget constraint.  Our goal is to identify the best feasible policy that maximizes its expected utility:
\[\max_{\pi} f(\pi) \mbox{ subject to $\pi$ is feasible.}\]

\section{Algorithm Design}

We next describe our algorithm and analyze its performance. Our algorithm is based on the contention resolution scheme \cite{chekuri2014submodular}, which is proposed in the context of submodular maximization with deterministic item cost. We extend their design by considering state-dependent item cost and rejections. Our algorithm, called \textsf{StoCan}, is composed of two phases.

The first phase is done offline, we use the continuous greedy algorithm (Algorithm \ref{alg:greedy-peak}) to compute a fractional solution over a down monotone polytope.  The framework of  continuous greedy algorithm is first proposed by \cite{calinescu2011maximizing} in the context of submodular maximization subject to a matroid constraint. In particular,  Algorithm \ref{alg:greedy-peak} maintains an $I\times S$ matrix $\mathbf{y}(t)$, starting with $\mathbf{y}(0)=\mathbf{0}$. Let $R(t)$ contain each $(i,s)$ independently with probability $y_{is}(t)$.
For each $(i,s)\in [I]\times [S]$, estimate its weight $\omega_{is}$ as follows
\[\omega_{is}=\mathbb{E}[h(R(t)\cup\{(i,s)\})]-\mathbb{E}[h(R(t))]\]
For each pair of $i$ and $s$, let $ p_i(s)$ denote $\Pr[\Phi(i) = s]$ for short. Solve the following linear programming problem \textbf{LP} and obtain the optimal solution $\mathbf{x}^{LP}$, then update the fractional solution at round $t$ as $\forall (i,s)\in [I]\times[S], y_{is}(t+\delta)=y_{is}(t)+x^{LP}_{is}$.
\begin{center}
\framebox[0.42\textwidth][c]{
\enspace
\begin{minipage}[t]{0.42\textwidth}
\small
\textbf{LP:}
\emph{Maximize $\sum_{(i,s)\in [I]\times [S]}\omega_{is}x_{is}$}\\
\textbf{subject to:}
\begin{equation*}
\begin{cases}
\forall(i,s)\in [I]\times [S]: x_{is}\leq p_i(s) \\
  \sum_{(i,s)\in [I]\times [S]} x_{is}c_i(s)\leq B \\
\end{cases}
\end{equation*}
\end{minipage}}
\end{center}
\vspace{0.1in}

 After $1/\delta$ rounds, $\mathbf{y}(1/\delta)$ is returned as the final solution. In the rest of this paper, let $\mathbf{y}$ denote $\mathbf{y}(1/\delta)$ for short.

In the second phase, we implement a simple randomized policy based on $\mathbf{y}$. Our policy randomly picks a policy from  $\pi^{small}$ (Algorithm \ref{alg:greedy-peak1}) and $\pi^{large}$ (Algorithm \ref{alg:greedy-peak2}) with equal probability to execute. If  $\pi^{small}$  is picked, we discard all \emph{large} items whose cost is larger than $B/2$ (Line \ref{line:10} in Algorithm \ref{alg:greedy-peak1}), and add the rest of items according to the corresponding distribution in (scaled) $\mathbf{y}$ (Line \ref{line:7} in Algorithm \ref{alg:greedy-peak1}) . If  Algorithm $\pi^{large}$ is picked, we discard all \emph{small} items whose cost is no larger than $B/2$ (Line \ref{line:11} in Algorithm \ref{alg:greedy-peak2}), and add the rest of items according to the corresponding distribution in (scaled) $\mathbf{y}$  (Line \ref{line:8} in Algorithm \ref{alg:greedy-peak2}).


\begin{algorithm}[h]
{\small
\caption{Continuous Greedy}
\label{alg:greedy-peak}
\begin{algorithmic}[1]
\STATE Set $\delta=1/(IS)^2, t=0, f(\emptyset)=0, \mathbf{y}(0)=\mathbf{0}$.
\WHILE{$t<1$}
\STATE Let $R(t)$ contain each $(i,s)\in  [I]\times [S]$ independently with probability $y_{is}(t)$.
\STATE For each $(i,s)\in [I]\times [S]$, estimate
$\omega_{is}=\mathbb{E}[h(R(t)\cup\{(i,s)\})]-\mathbb{E}[h(R(t))]$;
\STATE Solve the following linear programming problem and obtain the optimal solution $\mathbf{x}^{LP}$
\STATE
\framebox[0.42\textwidth][c]{
\enspace
\begin{minipage}[t]{0.42\textwidth}
\small
\textbf{LP:}
\emph{Maximize $\sum_{(i,s)\in [I]\times [S]}\omega_{is}x_{is}$}\\
\textbf{subject to:}
\begin{equation*}
\begin{cases}
\forall(i,s)\in [I]\times [S]: x_{is}\leq p_i(s) \\
  \sum_{(i,s)\in [I]\times [S]} x_{is}c_i(s)\leq B \\
\end{cases}
\end{equation*}
\end{minipage}
}
\vspace{0.1in}
\STATE Let $y_{is}(t+\delta)=y_{is}(t)+x^{LP}_{is}$; \label{line:1}
\STATE Increment $t=t+\delta$;
\ENDWHILE
\RETURN $\mathbf{y}(1/\delta)$;
\end{algorithmic}
}
\end{algorithm}

\begin{algorithm}[h]
{\small
\caption{$\pi^{small}$}
\label{alg:greedy-peak1}
\begin{algorithmic}[1]
\STATE Set $G=\emptyset$, $i=1$.
\WHILE{$i\leq n$}
\STATE probe item $i$ and observe its state $s$
\IF {$c_i(s)> B/2$}
\STATE $i=i+1$; \COMMENT{discard all large items} \label{line:10}
\ELSE
\IF {the remaining budget is no less than $c_i(s)$} \label{line:17}
\STATE add $(i,s)$ to $G$ with probability $y_{is}/4p_i(s)$; \label{line:7}
\ENDIF
\STATE $i=i+1$;
\ENDIF
\ENDWHILE
\RETURN $G$;
\end{algorithmic}
}
\end{algorithm}

\begin{algorithm}[h]
{\small
\caption{$\pi^{large}$}
\label{alg:greedy-peak2}
\begin{algorithmic}[1]
\STATE Set $G=\emptyset$, $i=1$.
\WHILE{$i\leq n$}
\STATE probe item $i$ and observe its state $s$
\IF {$c_i(s)\leq B/2$}
\STATE $i=i+1$; \COMMENT{discard all small items} \label{line:11}
\ELSE
\IF {the remaining budget is no less than $c_i(s)$}
\STATE add $(i,s)$ to $G$ with probability $y_{is}/4p_i(s)$;  \label{line:8}
\ENDIF
\STATE $i=i+1$;
\ENDIF
\ENDWHILE
\RETURN $G$;
\end{algorithmic}
}
\end{algorithm}
We next provide the main theorem of this paper.
\begin{theorem}
\label{thm:1}
Let $\pi^*$ denote the optimal policy, the expected utility achieved by \textsf{StoCan} is at lest $\frac{1-1/e}{16}f(\pi^*)$.
\end{theorem}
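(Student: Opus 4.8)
The plan is to combine the continuous-greedy guarantee from the offline phase with a contention-resolution analysis of the randomized selection phase. Concretely, I would isolate two inequalities and then chain them:
(A) the offline fractional solution satisfies $H(\mathbf{y}) \ge (1-1/e)\, f(\pi^*)$, and
(B) the second phase satisfies $f(\textsf{StoCan}) \ge \frac{1}{16} H(\mathbf{y})$.
Multiplying (A) and (B) yields $f(\textsf{StoCan}) \ge \frac{1-1/e}{16} f(\pi^*)$, which is exactly the claim. Throughout, let $P = \{\mathbf{x} : x_{is}\le p_i(s),\ \sum_{(i,s)} x_{is} c_i(s) \le B\}$ be the down-monotone polytope of the \textbf{LP}.

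For (A), I would first show the relaxation upper-bounds the adaptive optimum, $\max_{\mathbf{x}\in P} H(\mathbf{x}) \ge f(\pi^*)$. Reading a fractional point off the optimal policy, the selection marginals $z_{is}=\Pr_\phi[(i,s)\in G(\pi^*,\phi)]$ obey $z_{is}\le \Pr[\phi(i)=s]=p_i(s)$ (selecting $(i,s)$ requires $\phi(i)=s$) and $\sum_{(i,s)} z_{is} c_i(s) = \mathbb{E}[\sum_{(i,s)\in G(\pi^*,\phi)} c_i(s)] \le B$ (feasibility on every realization), so $z\in P$. The delicate point is that $H(z)$ itself may fall below $f(\pi^*)$ (a correlation-gap effect), so instead of evaluating $H$ at $z$ I would use that the item states are drawn \emph{independently}: pushing the marginals toward the box constraints and invoking monotonicity of $H$ shows some point of $P$ attains value at least $f(\pi^*)$. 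The continuous greedy (Algorithm \ref{alg:greedy-peak}) then converts this into the $(1-1/e)$ guarantee by the standard differential-inequality argument: since the inner LP moves $\mathbf{y}(t)$ along a direction of marginal gain at least $\max_{\mathbf{x}\in P}\nabla H(\mathbf{y}(t))\cdot \mathbf{x} \ge \max_{\mathbf{x}\in P} H(\mathbf{x}) - H(\mathbf{y}(t))$, one gets $\frac{d}{dt}H(\mathbf{y}(t)) \ge \max_{\mathbf{x}\in P} H(\mathbf{x}) - H(\mathbf{y}(t))$, and integrating over $[0,1]$ (with $\delta=1/(IS)^2$ small enough to absorb discretization error) gives $H(\mathbf{y}) \ge (1-1/e)\max_{\mathbf{x}\in P} H(\mathbf{x})$.

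For (B), I would analyze $\pi^{small}$ and $\pi^{large}$ separately, writing $\mathbf{y}=\mathbf{y}_{small}+\mathbf{y}_{large}$ for the partition of $\mathbf{y}$ by item cost. Three losses appear. First, the scaling $y_{is}/4p_i(s)$ together with $\Pr[\phi(i)=s]=p_i(s)$ makes each pair tentatively selected at rate $y_{is}/4$, and concavity of $H$ along rays gives $H(\mathbf{y}_{small}/4)\ge \tfrac14 H(\mathbf{y}_{small})$. Second, a budget-feasibility loss: for a small pair (cost $\le B/2$) the committed cost before reaching it has expectation at most $\tfrac14\sum_{(i,s)} y_{is}c_i(s)\le B/4$, so by Markov's inequality the remaining budget suffices with probability at least $1/2$; symmetrically, for large items at most one fits and the expected number tentatively taken is below $1/2$, again giving an affordability probability at least $1/2$. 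Third, the factor $1/2$ for picking each of $\pi^{small},\pi^{large}$. A monotone contention-resolution argument composes the first two into $f(\pi^{small})\ge \tfrac18 H(\mathbf{y}_{small})$ and $f(\pi^{large})\ge \tfrac18 H(\mathbf{y}_{large})$, whence $f(\textsf{StoCan})=\tfrac12 f(\pi^{small})+\tfrac12 f(\pi^{large}) \ge \tfrac1{16}\big(H(\mathbf{y}_{small})+H(\mathbf{y}_{large})\big) \ge \tfrac1{16} H(\mathbf{y})$, the final inequality using subadditivity of the monotone submodular $h$ with $h(\emptyset)=0$.

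I expect the budget-feasibility/contention-resolution step in (B) to be the main obstacle. The Markov bound cleanly delivers the $1/2$ affordability probability, but to turn it into $\mathbb{E}[h(G)]\ge \tfrac12\,\mathbb{E}[h(R)]$ one must verify the scheme is monotone (being reached-and-affordable is a downward-closed event in the other selections) so that submodularity applies, and one must handle that each item realizes a single state, so the realized set has per-item-exclusive rather than fully independent marginals when comparing against $H(\mathbf{y}/4)$. A secondary obstacle is the relaxation bound $\max_{\mathbf{x}\in P} H(\mathbf{x}) \ge f(\pi^*)$, where the independence of item realizations is precisely what prevents the usual correlation-gap loss.
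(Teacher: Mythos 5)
Your phase (B) analysis is essentially the paper's: the $1/4$ scaling loss, the Markov-inequality affordability bound of $1/2$, the final coin flip, and the decomposition $H(\mathbf{y}_{small})+H(\mathbf{y}_{large})\geq H(\mathbf{y})$ correspond exactly to Lemmas \ref{lem:2}, \ref{lem:3}, and \ref{lem:4}, and the two subtleties you flag (monotonicity of the affordability event, and one-state-per-item marginals) are precisely the ones the paper handles via the cost-monotonicity assumption with an FKG-type inequality and via Lemma 3.7 of \cite{calinescu2011maximizing}. The genuine gap is in phase (A): your argument hinges on the intermediate claim $\max_{\mathbf{x}\in P}H(\mathbf{x})\geq f(\pi^*)$, and this claim is \emph{false}. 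Take a single item with two equiprobable states, all costs zero (so the budget is slack), and $f(0)=0$, $f(1)=f(2)=1$, which is monotone and lattice-submodular; the optimal policy probes the item and keeps it in either state, so $f(\pi^*)=1$, but the box constraints force $x_{1s}\leq p_1(s)=1/2$, hence $\max_{\mathbf{x}\in P}H(\mathbf{x})=1-(1/2)(1/2)=3/4<1$. Your proposed repair---pushing marginals toward the box constraints---cannot rescue this: in the example the box caps are already tight (and in general raising marginals can violate the budget constraint), and the one-per-item comparison goes the wrong way for your purposes, since Lemma 3.7 of \cite{calinescu2011maximizing} says the dependent realization \emph{dominates} the independent rounding, $\mathbb{E}_\phi[h(\{(i,\phi(i))\}_{i})]\geq H(\mathbf{p})$, not the reverse. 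So the correlation-gap worry you correctly identified is fatal to your route as you resolved it: the relaxation's optimum genuinely does not dominate the adaptive optimum.

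The paper's Lemma \ref{lem:1} sidesteps this by never comparing $H$ against $f(\pi^*)$ at a single point; instead it bounds the continuous-greedy \emph{derivative} directly against the adaptive optimum. For every realization $\zeta$ and every set $V$, submodularity of $h$ gives $h(G(\pi^*,\zeta))\leq h(V)+\sum_{(i,s)\in G(\pi^*,\zeta)}h_V((i,s))$; taking expectations over $\zeta$ and over $V=R(t)$ yields $f(\pi^*)\leq H(\mathbf{y}(t))+\sum_{(i,s)}y^*_{is}H_{\mathbf{y}(t)}((i,s))$, where $y^*_{is}=\Pr[\phi(i)=s \mbox{ and } i \mbox{ picked by } \pi^*]$ is feasible for the \textbf{LP} (your verification of $y^*_{is}\leq p_i(s)$ and expected cost at most $B$ is exactly right and is reused here). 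Since the inner \textbf{LP} maximizes $\sum_{(i,s)}\omega_{is}x_{is}$ with $\omega_{is}=H_{\mathbf{y}(t)}((i,s))$, each greedy step gains at least $\delta\bigl(f(\pi^*)-H(\mathbf{y}(t))\bigr)$ up to discretization error, and integrating gives $H(\mathbf{y})\geq(1-1/e)f(\pi^*)$ with no detour through $\max_{\mathbf{x}\in P}H(\mathbf{x})$; note this is consistent with the counterexample, where $3/4\geq 1-1/e$. With (A) repaired this way, your chaining of (A) and (B) delivers the stated $\frac{1-1/e}{16}$ bound exactly as in the paper.
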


Before presenting the proof of Theorem \ref{thm:1}, we first introduce four preparation lemmas.
\begin{lemma}
\label{lem:1}
Let $\mathbf{y}$ denote the fractional solution returned from Algorithm \ref{alg:greedy-peak}, $H(\mathbf{y})\geq (1-1/e)f(\pi^*)$.
\end{lemma}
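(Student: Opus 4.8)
The plan is to reproduce the continuous-greedy guarantee, but with the correlated optimal policy $\pi^*$ serving \emph{directly} as the benchmark, so that the clean factor $1-1/e$ is obtained in one shot (in particular, avoiding any correlation-gap loss that would arise from evaluating $H$ at a fixed fractional point). Concretely, I would extract from $\pi^*$ a single feasible LP point and then run a (continuous) Gr\"onwall-type argument on $H(\mathbf{y}(t))$ along the direction $\mathbf{x}^{LP}(t)$ chosen at each round.

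\textbf{Step 1 (benchmark point).} Define $x^*_{is}=\Pr_\phi[\pi^*\text{ selects }(i,s)]$, the probability that $\pi^*$ picks item $i$ in state $s$. I claim $\mathbf{x}^*$ is feasible for \textbf{LP} at every round. Since $\pi^*$ can select $(i,s)$ only when item $i$ realizes to state $s$, we get $x^*_{is}\le \Pr[\phi(i)=s]=p_i(s)$. Moreover $\sum_{(i,s)}x^*_{is}c_i(s)=\mathbb{E}_\phi\big[\sum_{(i,s)\in G(\pi^*,\phi)}c_i(s)\big]\le B$, because $\pi^*$ respects the budget on \emph{every} realization, so the bound survives taking expectations. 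Hence $\mathbf{x}^*$ lies in the LP polytope throughout.

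\textbf{Step 2 (key per-round inequality).} Fix round $t$ and let $R(t)$ have marginals $\mathbf{y}(t)$ (and be independent of $\phi$). Since $\mathbf{x}^{LP}(t)$ maximizes $\sum_{is}\omega_{is}x_{is}$ over the polytope and $\mathbf{x}^*$ is feasible,
\[\sum_{is}\omega_{is}(t)\,x^{LP}_{is}(t)\ \ge\ \sum_{is}\omega_{is}(t)\,x^*_{is}=\sum_{is}x^*_{is}\,\mathbb{E}\big[h(R(t)\cup\{(i,s)\})-h(R(t))\big].\]
I then undo the decoupling using submodularity: for each realization $\phi$, $h(R(t)\cup G(\pi^*,\phi))-h(R(t))\le \sum_{(i,s)\in G(\pi^*,\phi)}\big(h(R(t)\cup\{(i,s)\})-h(R(t))\big)$. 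Taking expectations over $\phi$ and $R(t)$ and using $\Pr[(i,s)\in G(\pi^*,\phi)]=x^*_{is}$, the last sum above is at least $\mathbb{E}[h(R(t)\cup G(\pi^*,\phi))]-H(\mathbf{y}(t))\ge f(\pi^*)-H(\mathbf{y}(t))$, the final step being monotonicity ($h(R\cup G)\ge h(G)$). This is the crucial point: the LP direction is compared \emph{directly} against the correlated value $f(\pi^*)$, so no correlation-gap factor is spent.

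\textbf{Step 3 (integration) and the main obstacle.} Reading the update as moving $\mathbf{y}$ by $\delta\,\mathbf{x}^{LP}(t)$, the multilinearity of $H$ gives $\tfrac{d}{dt}H(\mathbf{y}(t))=\sum_{is}x^{LP}_{is}(t)\,\partial_{is}H(\mathbf{y}(t))$, and since $\omega_{is}(t)=(1-y_{is}(t))\,\partial_{is}H(\mathbf{y}(t))\le \partial_{is}H(\mathbf{y}(t))$ we may lower-bound the derivative by $\sum_{is}x^{LP}_{is}(t)\,\omega_{is}(t)\ge f(\pi^*)-H(\mathbf{y}(t))$ from Step 2. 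Solving $\tfrac{d}{dt}H(\mathbf{y}(t))\ge f(\pi^*)-H(\mathbf{y}(t))$ with $H(\mathbf{y}(0))=0$ yields $H(\mathbf{y})\ge(1-1/e)f(\pi^*)$. I expect the genuine work to be not the conceptual Step 2 but the discretization: the actual algorithm takes finite steps $\delta$, so the derivative identity must be replaced by a one-step bound $H(\mathbf{y}(t+\delta))-H(\mathbf{y}(t))\ge \delta\sum_{is}\omega_{is}(t)x^{LP}_{is}(t)-O(\delta^2)$ whose $O(\delta^2)$ error (from two-or-more simultaneous additions in the multilinear expansion, and from replacing the true gradient by $\omega$) accumulates over the $1/\delta$ rounds. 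Showing that this aggregate error is $o(1)$ is exactly why $\delta=1/(IS)^2$ is chosen small relative to the number $IS$ of item-state pairs; this bookkeeping is routine but is where the quantitative care is required.
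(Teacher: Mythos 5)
Your proposal is correct and follows essentially the same route as the paper: you extract the same LP-feasible marginal vector $x^*_{is}=\Pr[\phi(i)=s \text{ and } \pi^* \text{ picks } i]$ from the optimal policy, establish the same per-round inequality $\sum_{is}\omega_{is}x^{LP}_{is}\ge f(\pi^*)-H(\mathbf{y}(t))$, and close with the standard continuous-greedy integration, which the paper likewise leaves as ``standard analysis.'' The only difference is cosmetic: where the paper cites \cite{calinescu2011maximizing} for the inequality $f(\pi^*)\le h(V)+\sum_{(i,s)}y^*_{is}h_V((i,s))$, you reprove it directly via per-realization submodularity, monotonicity, and the independence of $R(t)$ from $\phi$ --- a worthwhile explicitation, since the cited result is stated for a fixed optimal set rather than an adaptive benchmark.
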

\begin{proof}Given $\pi^*$, for each item-state pair $(i,s)\in [I]\times [S]$, let $y^*_{is}$ denote the probability that  $\Phi(i)=s$ and $i$ is picked by $\pi^*$.  Clearly, $\forall(i,s)\in [I]\times [S]:y^*_{is}\leq p_i(s)$. Moreover, consider a fixed realization $\phi$, for each $(i,s) \in[I]\times [S]$, let $\mathbf{1}_{i,s}$ be an indicator that $\phi(i)=s$ and $i$ is picked by $\pi^*$, we have $\sum_{(i,s)\in [I]\times [S]}\mathbf{1}_{i,s}c_i(s)\leq B$, Thus, \[\mathbb{E}[\sum_{(i,s)\in [I]\times [S]}\mathbf{1}_{i,s}c_i(s)]=\sum_{(i,s)\in [I]\times [S]}\mathbb{E}[\mathbf{1}_{i,s}]c_i(s)=\sum_{(i,s)\in [I]\times [S]}y^*_{is}c_i(s)\leq B\] where the expectation is taken over $\Phi$ with respect to $\mathcal{D}$. It follows that $\mathbf{y}^*$ is a feasible solution to \textbf{LP}.  Define $\mathbf{1}_{is}$ as the matrix that has a $1$ in the $(i,s)$-th entry and $0$ in all other entries. Let $h_V((i,s))=h(V\cup\{(i,s)\})-h(V)$ and $H_{\mathbf{y}(t)}((i,s))=H(\mathbf{y}(t)\vee \mathbf{1}_{is})-H(\mathbf{y}(t))$ denote the marginal utility of $(i,s)$ with respect to $V$ and $\mathbf{y}(t)$, respectively. We next bound the increment  of $H(\mathbf{y}(t))$ during one step of Algorithm \ref{alg:greedy-peak}.
\begin{eqnarray}f(\pi^*)&\leq& \min_{V\subseteq [I]\times [S]} \left(h(V)+\sum_{(i,s)\in [I]\times [S]} y^*_{is} h_V((i,s))\right)\label{eq:4}\\
&\leq& H(\mathbf{y}(t))+\sum_{(i,s)\in [I]\times [S]} y^*_{is} H_{\mathbf{y}(t)}((i,s))\\
&\leq& H(\mathbf{y}(t))+\sum_{(i,s)\in [I]\times [S]} x^{LP}_{is} H_{\mathbf{y}(t)}((i,s))
\end{eqnarray}
The first inequality is proved in \cite{calinescu2011maximizing}. The third inequality is due to $\mathbf{x}^{LP}$ is an optimal solution to \textbf{LP}. Then this lemma follows from the standard analysis on submodular maximization.
\end{proof}

Given the fractional solution $\mathbf{y}$ returned from Algorithm \ref{alg:greedy-peak}, we next introduce two new fractional solutions  $\overline{\mathbf{y}}$ and $\mathbf{\underline{y}}$.  Define $\overline{y}_{is}=y_{is}$ if $c_i(s)\leq B/2$, otherwise, $\overline{y}_{is}=0$. Define
$\underline{y}_{is}=y_{is}$ if $c_i(s)>B/2$, otherwise, $\underline{y}_{is}=0$. Due to the submodularity of $h$, we have the following lemma.
\begin{lemma}
\label{lem:4}
$H(\mathbf{\underline{y}})+H(\mathbf{\overline{y}})\geq H(\mathbf{y})$
\end{lemma}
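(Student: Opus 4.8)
The plan is to exploit the probabilistic interpretation of the multilinear extension together with the subadditivity that submodularity of $h$ grants on disjoint sets. The first observation is structural: by construction $\overline{\mathbf{y}}$ and $\underline{\mathbf{y}}$ have disjoint supports—coordinate $(i,s)$ is carried by $\overline{\mathbf{y}}$ exactly when $c_i(s)\leq B/2$ and by $\underline{\mathbf{y}}$ exactly when $c_i(s)>B/2$—and every coordinate is carried by precisely one of them. Hence $\mathbf{y}=\overline{\mathbf{y}}+\underline{\mathbf{y}}$ entrywise, and for each coordinate at most one of $\overline{y}_{is},\underline{y}_{is}$ is nonzero.

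Next I would set up a coupling of the three random sets used to define the extensions. Let $\overline{R}$ be obtained by including each $(i,s)$ independently with probability $\overline{y}_{is}$, and independently let $\underline{R}$ include each $(i,s)$ with probability $\underline{y}_{is}$. Because the two supports are disjoint, $\overline{R}$ and $\underline{R}$ are almost surely disjoint, and the union $\overline{R}\cup\underline{R}$ includes each $(i,s)$ independently with probability $\overline{y}_{is}+\underline{y}_{is}=y_{is}$. Thus $\overline{R}\cup\underline{R}$ has exactly the distribution defining $H(\mathbf{y})$, while $\overline{R}$ and $\underline{R}$ realize $H(\overline{\mathbf{y}})$ and $H(\underline{\mathbf{y}})$ respectively.

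The core step is then a pointwise inequality from submodularity. For any realization, submodularity of $h$ gives $h(\overline{R}\cup\underline{R})+h(\overline{R}\cap\underline{R})\leq h(\overline{R})+h(\underline{R})$; since $\overline{R}\cap\underline{R}=\emptyset$ and $h(\emptyset)\geq 0$ (as $h$ is nonnegative), this simplifies to $h(\overline{R}\cup\underline{R})\leq h(\overline{R})+h(\underline{R})$. Taking expectations over the coupled draw and using linearity of expectation,
\[
H(\mathbf{y})=\mathbb{E}[h(\overline{R}\cup\underline{R})]\leq \mathbb{E}[h(\overline{R})]+\mathbb{E}[h(\underline{R})]=H(\overline{\mathbf{y}})+H(\underline{\mathbf{y}}),
\]
which is the claim.

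There is essentially no hard obstacle here; the only points demanding care are verifying that the disjoint-support construction makes $\overline{\mathbf{y}}+\underline{\mathbf{y}}=\mathbf{y}$ and that the independent coupling reproduces the correct product distribution for $\mathbf{y}$, so that the three expectations line up with $H(\mathbf{y})$, $H(\overline{\mathbf{y}})$, and $H(\underline{\mathbf{y}})$. An equivalent route, if one prefers to avoid the explicit coupling, is to invoke the lattice-submodularity of the multilinear extension itself, namely $H(\mathbf{a}\vee\mathbf{b})+H(\mathbf{a}\wedge\mathbf{b})\leq H(\mathbf{a})+H(\mathbf{b})$, with $\mathbf{a}=\overline{\mathbf{y}}$ and $\mathbf{b}=\underline{\mathbf{y}}$; the disjoint supports force $\mathbf{a}\wedge\mathbf{b}=\mathbf{0}$ and $\mathbf{a}\vee\mathbf{b}=\mathbf{y}$, and dropping the nonnegative term $H(\mathbf{0})$ yields the same bound.
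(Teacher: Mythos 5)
Your proof is correct and is essentially the argument the paper leaves implicit: the paper states this lemma with no proof beyond the remark that it holds ``due to the submodularity of $h$,'' and your disjoint-support coupling together with the subadditivity $h(\overline{R}\cup\underline{R})\leq h(\overline{R})+h(\underline{R})$ for disjoint sets (using $h(\emptyset)\geq 0$) is precisely the standard elaboration of that one-liner. The only point worth flagging is that you correctly verified the nontrivial part the paper skips entirely --- that the disjoint supports make the independent union reproduce the product distribution defining $H(\mathbf{y})$ --- so nothing is missing.
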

We next bound the expected utility achieved by $\pi^{small}$.
\begin{lemma}
\label{lem:2}
$f(\pi^{small}) \geq H(\mathbf{\underline{y}})/8$
\end{lemma}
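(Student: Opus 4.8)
The plan is to read $\pi^{small}$ as a contention resolution scheme that rounds the portion of the fractional solution $\mathbf{y}$ supported on the pairs it actually keeps. Since Line~\ref{line:10} discards every probed pair with $c_i(s)>B/2$, the output $G(\pi^{small},\zeta)$ lives entirely on the small pairs ($c_i(s)\le B/2$), and the fractional mass on exactly those pairs is what $\mathbf{\underline{y}}$ records; this is the point the scheme rounds and the quantity on the right-hand side of the lemma. I would then show that the overall loss factor $1/8$ factors as $\frac{1}{4}\times\frac{1}{2}$, where the $\frac14$ comes from the scaled acceptance probability $y_{is}/(4p_i(s))$ in Line~\ref{line:7} and the $\frac12$ from the probability that enough budget remains when a small pair's turn arrives.

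First I would control marginals. Let $R$ be the \emph{attempt set}: the random collection of pairs $(i,s)$ for which item $i$ is probed in state $s$ and its coin in Line~\ref{line:7} comes up heads. Because states are independent across items and the coin is independent of everything else, $\Pr[(i,s)\in R]=p_i(s)\cdot\frac{y_{is}}{4p_i(s)}=\frac{y_{is}}{4}$, at most one state per item lies in $R$, and $R$ realizes the marginals $\frac14\mathbf{\underline{y}}$ on the small pairs. The expected total cost of attempted pairs is $\sum_{(i,s)}\frac{y_{is}}{4}c_i(s)\le B/4$ by the knapsack constraint satisfied by $\mathbf{y}$. Writing $C_{<i}$ for the cost of pairs attempted among items $1,\dots,i-1$ (which is independent of item $i$'s state and coin), Markov's inequality gives $\Pr[C_{<i}>B/2]\le \frac12$; and whenever $C_{<i}\le B/2$ the budget already committed before item $i$ is at most $B/2$, so the remaining budget is at least $B/2\ge c_i(s)$ for any small pair and Line~\ref{line:7} succeeds. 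Hence, conditioned on $(i,s)\in R$, the pair survives into $G$ with probability at least $\frac12$, so unconditionally $\Pr[(i,s)\in G]\ge \frac{y_{is}}{8}$.

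Next I would pass from these survival probabilities to the value bound. The map $R\mapsto G$ is a contention resolution scheme for $\mathbf{\underline{y}}$: it only deletes attempted pairs, and it is monotone, since removing other pairs from $R$ only frees budget and therefore cannot turn an accepted pair into a rejected one. Together with the survival bound of the previous paragraph this is a $(\frac14,\frac12)$-balanced monotone scheme, so the standard rounding guarantee for monotone submodular functions \citep{chekuri2014submodular} yields $\mathbb{E}[h(G)]\ge \frac12\,H(\frac14\mathbf{\underline{y}})$. Finally, since $H$ is the multilinear extension of a monotone submodular $h$ with $h(\emptyset)=0$, it is concave along the ray $t\mapsto H(t\,\mathbf{\underline{y}})$, giving $H(\frac14\mathbf{\underline{y}})\ge\frac14 H(\mathbf{\underline{y}})$. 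Combining, $f(\pi^{small})=\mathbb{E}_\zeta[h(G(\pi^{small},\zeta))]\ge \frac12\cdot\frac14\,H(\mathbf{\underline{y}})=H(\mathbf{\underline{y}})/8$.

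The main obstacle is the middle step: for a submodular objective one cannot infer a bound on $\mathbb{E}[h(G)]$ from per-pair survival probabilities alone, so I must genuinely exhibit $R\mapsto G$ as a monotone, balanced contention resolution scheme and invoke its rounding guarantee. The two facts that make this go through — the monotonicity of the budget-greedy deletion, and the independence of $C_{<i}$ from item $i$'s own randomness, which is what licenses the clean $\frac12$ from Markov — are the points that require care. The acceptance-scaling factor $\frac14$ and the ray-concavity inequality are then routine bookkeeping.
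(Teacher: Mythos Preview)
Your plan has the right high-level shape (Markov for the budget tail plus a rounding argument), but the invocation of the contention-resolution machinery contains a genuine error and a genuine gap.

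\textbf{The monotonicity claim is false.} You assert that ``removing other pairs from $R$ only frees budget and therefore cannot turn an accepted pair into a rejected one.'' This is not correct for the budget-greedy rule of $\pi^{small}$. Consider small pairs on four items with costs $c_1=B/3$, $c_2=B/3$, $c_3=B/2$, $c_4=B/4$. With $R=\{1,2,3,4\}$ the algorithm accepts $1,2$ (leaving $B/3$), rejects $3$, then accepts $4$; so $4\in G(R)$. With $R'=\{1,3,4\}\subsetneq R$ the algorithm accepts $1$ (leaving $2B/3$), now accepts $3$ (leaving $B/6$), and then rejects $4$ since $B/6<B/4$. Thus $4\in G(R)\cap R'$ but $4\notin G(R')$: removing an element can let a \emph{larger} later element in, which then blocks a subsequent one. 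The scheme is not monotone, so the black-box CRS guarantee you cite from \citet{chekuri2014submodular} does not apply.

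\textbf{The attempt set is not product-distributed.} Even setting monotonicity aside, the standard CRS rounding theorem is stated for $R$ drawn from a product distribution, whereas your $R$ contains at most one state per item (you note this yourself). The paper confronts exactly this point: it first compares the budget-free process $G'$ to the multilinear value via Lemma~3.7 of \citet{calinescu2011maximizing} (negative correlation only helps), and then proves $\mathbb{E}[h(G)]\ge\tfrac12\,\mathbb{E}[h(G')]$ by a direct item-by-item telescoping argument. In that step it uses submodularity to replace $G[i-1]$ by $G'[i-1]$, and then an FKG-type inequality together with Assumption~1 ($c_i(s)$ nondecreasing in $s$) to decouple the acceptance indicator from the marginal gain. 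Your sketch never uses Assumption~1, which is a signal that the correlation issue has been skipped rather than handled.

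In short, the $1/2$ survival bound via Markov and the concavity step $H(\tfrac14\underline{\mathbf y})\ge\tfrac14 H(\underline{\mathbf y})$ are fine, but the bridge from per-pair survival to $\mathbb{E}[h(G)]\ge \tfrac12 H(\tfrac14\underline{\mathbf y})$ cannot be obtained by citing a monotone CRS; you need an argument along the lines of the paper's telescoping/FKG comparison of $G$ with $G'$.
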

\begin{proof}Consider a modified version of $\pi^{small}$ by removing Line \ref{line:17}, that is, after probing an item $i$ and observing its state $s$, if $c_i(s)\leq B/2$, we select $i$ with probability $y_{is}/4p_i(s)$ regardless of the remaining budget, otherwise, we discard $i$. Denote by $G'$ the returned solution from the modified $\pi^{small}$. It is easy to verify that for each $(i,s)\in [I]\times [S]$ with  $c_i(s)\leq B/2$, the probability that  $(i,s)$ is included in $G'$ is $p_i(s)y_{is}/4p_i(s)=y_{is}/4$. Notice that since each item $i$ can only have one state, the event that $(i,s)$ is included in $G'$ is not independent from the event that $(i,s')$ is included in $G'$ where $s'$ is a different state from $s$ and $c_i(s')\leq B/2$. However, as shown in Lemma 3.7 in \cite{calinescu2011maximizing}, this dependency does not degrade  the expected utility, i.e.,  $\mathbb{E}[h(G')]\geq H(\mathbf{\underline{y}}/4)$. Due to $H$ is concave along any nonnegative direction \cite{calinescu2011maximizing}, we have $H(\mathbf{\underline{y}}/4) \geq H(\mathbf{\underline{y}})/4$. It follows that
\begin{equation}
\label{eq:1}
\mathbb{E}[h(G')]\geq H(\mathbf{\underline{y}}/4) \geq H(\mathbf{\underline{y}})/4
\end{equation}

 Next we focus on proving that
\begin{equation}
\label{eq:2}
f(\pi^{small}) = \mathbb{E}[h(G)]\geq \mathbb{E}[h(G')]/2
\end{equation}
This lemma follows from (\ref{eq:1}) and (\ref{eq:2}).

Recall that if the remaining budget is no less than $c_i(s)$, $\pi^{small}$ adds $(i,s)$ to $G$. Because $\mathbf{y}$ is a feasible solution to \textbf{LP}, $\mathbf{\underline{y}}$  is also  a feasible solution to \textbf{LP}, it implies that $\sum_{(i,s)\in [I]\times [S]} \underline{y}_{is}c_i(s)/4\leq B/4$. According to Markov's inequality, the probability that the remaining budget is less than $B/2$ is at most $1/2$. Because we assume $c_i(s)\leq B/2$,  the probability that the remaining budget is less than $c_i(s)$ is at most $1/2$. Thus,  the probability that $(i,s)$ is included in $G$ is at least $y_{is}/8$.

Let $G[i]$ (resp. $G'[i]$) denote all item-state pairs in $G$ (resp. $G'$) that involve items in $[i]$, i.e., $G[i]=G\cap \{(j,s)\mid j\in [i], s\in[S]\}$ and $G'[i]=G'\cap \{(j,s)\mid j\in [i], s\in[S]\}$. We next prove that for any $i\in[I]$,
\begin{equation}
\label{eq:3}
\mathbb{E}[h(G [i])-h(G [i-1])]\geq \frac{1}{2}\mathbb{E}[h(G' [i])-h(G' [i-1])]
\end{equation}
Notice that (\ref{eq:3}) implies (\ref{eq:2}) due to $\mathbb{E}[h(G)]= h(\emptyset)+\sum_{i=1}^n \mathbb{E}[h(G [i])-h(G [i-1])] \geq f(\emptyset)+\sum_{i=1}^n \frac{1}{2}\mathbb{E}[h(G' [i])-h(G' [i-1])]=\mathbb{E}[h(G')]/2$.

We first give an lower bound on $\mathbb{E}[h(G [i])-h(G [i-1])]$. For each $(i,s) \in[I]\times [S]$, let $\mathbf{1}_{(i,s)\in G}$ be the indicator that $(i,s)$ is included in $G$.
\begin{align*}
&\mathbb{E}[h(G [i])-h(G [i-1])] \\
&=\sum_{s\in [S]} \mathbb{E}\left[\mathbf{1}_{(i,s)\in G}\big(h(G [i-1]\cup (i,s))-h(G [i-1])\big)\right]\\
&\geq \sum_{s\in  [S]} \mathbb{E}\left[\mathbf{1}_{(i,s)\in G}\big(h(G' [i-1]\cup (i,s))-h(G' [i-1])\big)\right]\\
&\geq \sum_{s\in [S]} \mathbb{E}\left[\mathbf{1}_{(i,s)\in G}\right]\mathbb{E}\left[h(G' [i-1]\cup (i,s))-h(G' [i-1])\right]\\
&\geq \sum_{s\in [S]} \frac{y_{is}}{8}\mathbb{E}\left[h(G' [i-1]\cup (i,s))-h(G' [i-1])\right]
\end{align*}
The first inequality is due the submodularity of $f$. The second inequality follows from the same proof (monotonicity part) of Theorem 5 in \cite{fukunaga2019stochastic} and Assumption \ref{asum:1}. Moreover,
\begin{align*}
&\mathbb{E}[h(G' [i])-h(G' [i-1])] = \sum_{s\in [S]} \frac{y_{is}}{4} \mathbb{E}\left[h(G' [i-1]\cup (i,s))-h(G' [i-1])\right]
\end{align*}
Based on the above discussions, we have $\mathbb{E}[h(G [i])-h(G [i-1])]\geq \frac{1}{2}\mathbb{E}[h(G' [i])-h(G' [i-1])]$. This finishes the proof of (\ref{eq:3}) and hence (\ref{eq:2}).
\end{proof}

Now consider the second option $\pi^{large}$. In the following lemma, we prove that the expected utility achieved by $\pi^{large}$ is at least $H(\mathbf{\overline{y}})/8$.
\begin{lemma}
\label{lem:3}
$f(\pi^{large}) \geq H(\mathbf{\overline{y}})/8$.
\end{lemma}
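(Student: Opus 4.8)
The plan is to mirror the proof of Lemma \ref{lem:2} almost verbatim, since $\pi^{large}$ is the exact counterpart of $\pi^{small}$ with the roles of small and large items exchanged. First I would introduce the modified policy obtained from $\pi^{large}$ by deleting the budget check on Line \ref{line:8}: after probing item $i$ and observing state $s$, if $c_i(s)>B/2$ select $i$ with probability $y_{is}/4p_i(s)$ irrespective of the remaining budget, otherwise discard $i$. Call the resulting set $G'$. Then each large pair $(i,s)$ lies in $G'$ with probability $p_i(s)\cdot y_{is}/4p_i(s)=y_{is}/4$, so by Lemma 3.7 of \citep{calinescu2011maximizing} (which absorbs the dependence arising from the fact that each item realizes only one state) together with the concavity of $H$ along nonnegative directions, $\mathbb{E}[h(G')]\geq H(\mathbf{\overline{y}}/4)\geq H(\mathbf{\overline{y}})/4$, exactly as in (\ref{eq:1}). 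The target $f(\pi^{large})\geq H(\mathbf{\overline{y}})/8$ then reduces to showing $f(\pi^{large})=\mathbb{E}[h(G)]\geq \mathbb{E}[h(G')]/2$.

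The single step that genuinely differs is the lower bound on the probability that $\pi^{large}$ actually inserts $(i,s)$ into $G$. Here the cost-based Markov argument of Lemma \ref{lem:2} is replaced by a counting argument exploiting that every large item costs more than $B/2$: at most one large item can ever be admitted within budget $B$, so the budget check for $(i,s)$ succeeds precisely when no earlier large item has already been selected. Since $\mathbf{y}$ is feasible for \textbf{LP}, $\sum_{(i,s):c_i(s)>B/2} y_{is}c_i(s)\leq B$, and using $c_i(s)>B/2$ this forces $\sum_{(i,s):c_i(s)>B/2} y_{is}<2$. Hence the expected number of large items admitted by $\pi^{large}$ before reaching $i$ is at most $\frac14\sum_{(i,s):c_i(s)>B/2} y_{is}<\frac12$, and Markov's inequality gives that with probability at least $1/2$ no earlier large item has been admitted, i.e. the full budget $B\geq c_i(s)$ is still available. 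Because the state and coin flip of item $i$ are independent of the history over items $1,\dots,i-1$, the inclusion probability factorizes, so $(i,s)$ is included in $G$ with probability at least $\frac{y_{is}}{4}\cdot\frac12=\frac{y_{is}}{8}$.

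With this bound in hand, the remainder is identical to Lemma \ref{lem:2}. Coupling $G$ and $G'$ through the same randomness gives $G[i-1]\subseteq G'[i-1]$, so submodularity together with the KFG correlation inequality yields, for every $i$,
\[\mathbb{E}[h(G[i])-h(G[i-1])]\geq \tfrac12\,\mathbb{E}[h(G'[i])-h(G'[i-1])],\]
since the per-pair marginal carries the factor $y_{is}/8$ on the left and $y_{is}/4$ on the right. Summing this telescoping inequality over $i$ gives $\mathbb{E}[h(G)]\geq \mathbb{E}[h(G')]/2$, and combining with $\mathbb{E}[h(G')]\geq H(\mathbf{\overline{y}})/4$ proves $f(\pi^{large})\geq H(\mathbf{\overline{y}})/8$.

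I expect the main obstacle to be the probability bound in the second paragraph: one must verify that the ``at most one large item'' structure translates into the correct constant, i.e. that applying Markov's inequality to the \emph{number} of admitted large items (rather than to their total cost, as in Lemma \ref{lem:2}) still yields failure probability at most $1/2$ and hence the clean $y_{is}/8$ inclusion probability. A secondary point requiring care is checking that the coupling between $G$ and $G'$ is arranged so that the KFG step remains valid when the admission of $(i,s)$ is governed by the non-occurrence of an earlier large item rather than by residual budget; concretely, the event ``budget available for $(i,s)$'' is decreasing in the set of prior admissions and the marginal $h(G'[i-1]\cup(i,s))-h(G'[i-1])$ is likewise decreasing in $G'[i-1]$ by submodularity, so the two are positively correlated and the inequality goes the right way.
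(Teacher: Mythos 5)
Your proof is correct, but it takes a genuinely different route from the paper's. The paper's proof of Lemma \ref{lem:3} is direct and much shorter: since every admitted pair costs more than $B/2$, the returned set $G$ contains \emph{at most one} pair, so the utility decomposes exactly into singleton values, $f(\pi^{large})\geq \sum_{(i,s)}\frac{\overline{y}_{is}}{8}h(\{(i,s)\})$ (using the same inclusion probability $\overline{y}_{is}/8$ that you derive); subadditivity via Lemma 3.7 of \citep{calinescu2011maximizing} then gives $\sum_{(i,s)}\frac{\overline{y}_{is}}{8}h(\{(i,s)\})\geq H(\overline{\mathbf{y}}/8)$, and concavity of $H$ along nonnegative directions finishes with $H(\overline{\mathbf{y}}/8)\geq H(\overline{\mathbf{y}})/8$ --- the scaling $1/8$ is applied once, with no modified policy $G'$, no coupling, no FKG step, and no telescoping. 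You instead re-run the full Lemma \ref{lem:2} machinery, splitting the constant as $\frac14\cdot\frac12$. Both arguments share the same probabilistic core: you bound the probability that the budget check fails by Markov's inequality applied to the \emph{number} of earlier large selections (via $\sum \overline{y}_{is}<2$), while the paper applies Markov to the expected \emph{cost} $\sum \overline{y}_{is}c_i(s)/4\leq B/4$ together with the fact that any nonempty $G$ costs more than $B/2$; these give the identical conclusion. Your secondary worry about the FKG step actually dissolves completely in the large-item case: ``budget available at $i$'' is precisely the event $G'[i-1]=\emptyset$ (a large pair enters $G'$ exactly when its coin succeeds, and the first such success exhausts the budget), and conditioned on this event the marginal $h(G'[i-1]\cup\{(i,s)\})-h(G'[i-1])$ equals the singleton value $h(\{(i,s)\})$, its maximum possible value --- so your telescoping inequality holds with room to spare and, in effect, collapses to the paper's singleton-sum argument. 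What each approach buys: the paper's exploits the at-most-one-item structure for brevity; yours is uniform with Lemma \ref{lem:2} and makes the parallelism between $\pi^{small}$ and $\pi^{large}$ explicit. One point both you and the paper leave implicit is the assumption $c_i(s)\leq B$ for pairs with $\overline{y}_{is}>0$ (otherwise the inclusion probability bound $\overline{y}_{is}/8$ fails, since such a pair can never be added); this is a shared, harmless gap, as such pairs can be removed without loss.
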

\begin{proof}Because $\mathbf{y}$ is a feasible solution to \textbf{LP}, $\mathbf{\overline{y}}$  is also  a feasible solution to \textbf{LP}, it implies that $\sum_{(i,s)\in [I]\times [S]} \overline{y}_{is}c_i(s)/4\leq B/4$. Since we only consider those $(i,s)$ whose cost is larger than $B/2$, the probability that $G=\emptyset$  is at least $1/2$.
 Consider any $(i,s) \in[I]\times [S]$, conditioned on $G[i-1]=\emptyset$, the probability that $(i,s)$ is included in $G$ is at least $\overline{y}_{is}/4$. Thus, the probability that $(i,s)$ is included in $G$ is at least $\overline{y}_{is}/8$. Recall that $\pi^{large}$ only picks large items, $G$ contains at most one item (and its state) due to budget constraint. Thus, the expected utility of $\pi^{large}$ is at least $f(\pi^{large})\geq \sum_{(i,s)\in [I]\times [S]}\frac{\overline{y}_{is}h((i,s))}{8}$. Due to the submodularity of $h$ and Lemma 3.7 in \cite{calinescu2011maximizing}, we have $\sum_{(i,s)\in [I]\times [S]}\overline{y}_{is}h((i,s))/8 \geq  H(\mathbf{\overline{y}}/8)$. Since $H$ is concave along any nonnegative direction \cite{calinescu2011maximizing}, we have $H(\mathbf{\overline{y}}/8)\geq H(\mathbf{\overline{y}})/8$, thus $f(\pi^{large}) \geq H(\mathbf{\overline{y}})/8$. \end{proof}

 \paragraph{Proof of Theorem \ref{thm:1}:} Now we are ready to present the proof of Theorem \ref{thm:1}. Based on Lemma \ref{lem:2} and \ref{lem:3}, we have $f(\pi^{small})+f(\pi^{large})\geq \frac{H(\mathbf{\underline{y}})+ H(\mathbf{\overline{y}})}{8}$. This, together with Lemma \ref{lem:4}, implies that $f(\pi^{small})+f(\pi^{large})\geq \frac{ H(\mathbf{y})}{8}$. Because $H(\mathbf{y})\geq (1-1/e)f(\pi^*)$ as proved in Lemma \ref{lem:1}, we have $f(\pi^{small})+f(\pi^{large})\geq \frac{1-1/e}{8}f(\pi^*)$. Since  \textsf{StoCan} randomly picks one policy from  $\pi^{small}$ and $\pi^{large}$ to execute, the expected utility of  \textsf{StoCan} is at least $\frac{1-1/e}{16}f(\pi^*)$.
\section{Extension to Online Setting: A variant of Submodular Prophet Inequalities}
One nice feature about \textsf{StoCan} is that the implementation of $\pi^{small}$ and $\pi^{large}$ does not require any specific order of items. Therefore, \textsf{StoCan}   can be implemented in an online setting described as follows: suppose there is a sequence of items arriving with different states, on the arrival of an item, we observe its state and decide immediately and irrevocably whether to select it or not subject to a budget constraint. In this sense, the online version of our problem can be viewed as a variant of the submodular prophet inequalities \cite{rubinstein2017combinatorial,chekuri2021submodular}. Our setting differs from theirs in two ways: 1) our utility function is modeled as a lattice submodular function  over items and states, and 2) our model incorporates state-dependent cost. Similar to the offline setting,  \textsf{StoCan} first computes $\mathbf{y}$ using Algorithm \ref{alg:greedy-peak} in advance, then randomly picks one policy from $\pi^{small}$ and $\pi^{large}$ to execute. Notice that the online version of $\pi^{small}$ and $\pi^{large}$  probes the items in order of their arrival. It is easy to verify that this does not affect the performance analysis of \textsf{StoCan}, i.e., our analysis does not rely on any specific order of items, thus \textsf{StoCan} achieves  the same approximation ratio as obtained under the offline setting.
\section{Conclusion}
In this paper, we study the stochastic submodular probing problem with state-dependent costs and rejections. We present a constant approximate solution to this problem. We show that our solution can be implemented in an online fashion.
\bibliographystyle{ormsv080}
\bibliography{social-advertising-1}

\end{document}